\documentclass[letterpaper,10pt,conference]{IEEEtran}
\IEEEoverridecommandlockouts
\usepackage{amsmath,amssymb,mathtools,bm,amsthm}
\usepackage{graphicx}
\usepackage{booktabs,tabularx,array,multirow}
\usepackage{cite}
\usepackage{microtype}
\usepackage{stfloats}
\usepackage{placeins}
\usepackage{balance}
\usepackage{tikz}
\usetikzlibrary{arrows.meta,positioning,calc}
\usepackage{xcolor}

\newtheorem{proposition}{Proposition}
\theoremstyle{remark}
\newtheorem{remark}{Remark}
\newcommand{\R}{\mathbb{R}}
\newcommand{\sat}{\operatorname{sat}}
\newcommand{\proj}{\operatorname{proj}}
\newcommand{\norm}[1]{\left\lVert#1\right\rVert}
\newcolumntype{Y}{>{\raggedright\arraybackslash}X}
\title{Adaptive control of pool-type research nuclear reactors: theory limits, pragmatic actuation constraints, and realized performance and adaptation}

\author{Dimitrios Pylorof \qquad \qquad Linyu Lin%
\thanks{D. Pylorof and L. Lin are with the U.S. Department of Energy Idaho National Laboratory, Idaho Falls, ID 83415, USA. This work was supported by the U.S. Department of Energy under Contract No. DE-AC07-05ID14517. The views expressed do not necessarily reflect those of the U.S. Government.}}

\begin{document}
\maketitle

\begin{abstract}
Adaptive control is attractive for nuclear reactors expected to operate predictably with extensive automation or bounded autonomy under dynamic electric and nonelectric demands, even as reactor and actuator characteristics depart from their nominal values. Standard adaptive control results, however, assume fixed matched dynamics and an input applied as computed. In this work we evaluate a recent model reference adaptive control with linear quadratic regulation (MRAC--LQR) construction on the nonlinear point-kinetics model of Purdue University Reactor Number One (PUR-1), with a moving physical reference, nonlinear rod worth, rate and travel limits, quantization, deadband, and optional power-only state reconstruction. An exact sampled decomposition separates matched adaptation from observer, command processing, actuator realization, independent reactivity, and unmatched-dynamics residuals. Numerical results emphasize compatible reference trajectories, plant parameter variation, response to unexpected rod motion, and actuator duty. Across sampled plants, one unchanged nominal design yields closely aligned responses through adaptively calculated, plant-dependent actuation histories, illustrating both the premise and the theory limits of adaptation at a pool-reactor scale that can support follow-on experimental verification and progression toward higher-consequence nuclear applications.
\end{abstract}

\begin{IEEEkeywords}
adaptive control, MRAC--LQR, nuclear reactor control, actuator constraints, point kinetics, load following
\end{IEEEkeywords}

\section{Introduction}
Since the advent of nuclear power generation in 1951 in the desert of eastern Idaho, reactor feedback control has matured significantly, but proposed missions and requirements are changing rapidly. New reactor concepts are being considered for remote sites, islanded grids, industrial process heat, non-baseload electricity generation, and applications that seek predictable operation with extensive automation or bounded autonomy and limited or no onsite staffing; naval and spacecraft propulsion represent notable frontiers. Their power demands may be unusually dynamic relative to the predominantly baseload operation that has characterized most nuclear deployments. In parallel, the reactor and actuators seen by the controller can change with operating conditions and service life through fuel burnup, poison inventory, temperature feedback, and rod worth~\cite{IAEA2018,Bhowmik2024,Ramuhalli2019,Fleming2020,Ramuhalli2025}. While protection and safety systems remain independent of feedback control, a critical question remains: \textit{can the power control layer follow compatible trajectories, reject disturbances, and remain predictable as both the mission and the true plant depart from the nominal design}?

The premise of model reference adaptive control (MRAC) is familiar: select a stable reference behavior, measure how the actual system departs from it, and update controller parameters so that the plant behaves more like the reference. We use the recent Fisher--Annaswamy MRAC--LQR construction~\cite{Fisher2026}, which combines a stabilizing nominal linear quadratic regulator (LQR), a direct adaptive update, and an epoch-gated learn-and-redesign mechanism, while retaining the model-following premise of direct MRAC~\cite{Narendra1989,Astrom1995}. A gap nevertheless emerges for any physical, hardware-actuated system: what happens when the reference moves, uncertainty is not fully matched, and the computed input must pass through a nontrivial actuator? We develop this work as both a complement to and a reality check on adaptive control as an implementable reactor control approach.

Pool-type research reactors provide an intermediate platform for examining adaptive control in nuclear systems. They retain neutron kinetics, control rods, instrumentation, operating procedures, and independent protection without the scale and regulatory complexity of a commercial plant. PUR-1 also provides a digital control and cyber-physical setting intended for staged evaluation~\cite{Theos2026}. This role resembles NASA's use of the AirSTAR subscale aircraft to confront adaptive flight control designs with sensing and actuation before higher-consequence deployment~\cite{Gregory2010}.

Adaptive control in nuclear systems has been studied through state and output feedback MRAC for reactor-power tracking under parametric uncertainty and external disturbances~\cite{reddy2020state,reddy2023output}, and through model-reference robust adaptive control of control-element drive mechanisms~\cite{park2020cedm}. Related LQR-referenced adaptive control has also been applied to a glycol heat exchanger in an integrated energy system, representing balance-of-plant thermal equipment rather than reactor core dynamics~\cite{seurin2025thermal}. The present work links these lines by evaluating core power adaptation through an explicit nonlinear, sampled, and constrained control rod interface.

Our contributions are as follows:
\begin{enumerate}
    \item We implement MRAC--LQR for PUR-1 load following and evaluate nonlinear and constrained actuation, quantization, deadband, observer-based state reconstruction, and disturbances, including effects outside the assumptions of the underlying theory.
    \item We formulate an exact sampled decomposition that separates matched adaptation from moving-reference, observer, actuator, independent reactivity, and unmatched dynamics residuals.
    \item Numerical evaluations cover direct and shaped maneuvers, 30 plant cases with paired adaptive/fixed-nominal runs, unexpected motion of a second rod, power-only reconstruction, and the deadband trade between power precision and actuator duty.
\end{enumerate}
Together, the formulation and 30-case comparison illustrate a practical meaning of robustness for adaptive control. The nominal physical reference, target dynamics, initial LQR gain, adaptive law structure, actuator calibration, and related logic are fixed once. Each plant is then simulated independently from the same controller initialization, with online adaptation driven only by that plant's observed transitions and delivered input. Despite different plant dynamics, the power trajectories remain closely aligned while the rod histories differ. Unlike a conventional plant family study with a fixed controller, the coefficients of each deployed plant are not used for plant-specific synthesis; the common adaptive architecture responds online to the plant encountered. This distinction is relevant as reactor parameters evolve over the life cycle.

\section{Reactor, Reference, and Controller}
\subsection{Physical plant and reactivity-driven dynamics}
Let $\theta\in\Theta$ collect physical reactor and actuator coefficients used in the numerical evaluations. The reactor state is
\[
 x=[p,c_1,c_2,c_3,c_4,c_5,c_6,T,I,X]^{\mathsf T}\in\R^{10},
\]
where $p$ is normalized neutron power, $c_i$ are six delayed neutron precursor states, $T$ is a lumped fuel/moderator temperature, and $I$ and $X$ are iodine and xenon states. Define $e_p:=[1\;0\;\cdots\;0]^{\mathsf T}\in\R^{10}$; the measured power is $y=e_p^{\mathsf T}x+\nu=p+\nu$, where $\nu$ is measurement error.

The manipulated physical input $u(t)$ is the AMDR reactivity contribution and $w(t)$ is the independent regulating rod (RR) contribution, both in parts per hundred thousand (pcm), with $1\ {\rm pcm}=10^{-5}$ dimensionless reactivity. Rod positions and the nonlinear maps that generate $u$ and $w$ are defined in Sec.~II-B. Downstream of those maps, the physical reactor is
\begin{equation}
 P_\theta:\qquad \dot x=f_\theta(x,u,w),\qquad
 y=e_p^{\mathsf T}x+\nu.
 \label{eq:plant}
\end{equation}
Two safety shim rods remain fixed during each run; their combined worth is $W_{{\rm fix},\theta}$. With baseline excess reactivity offset $\rho_b$ in pcm, temperature and xenon feedback coefficients $\alpha_T$ and $\alpha_X$, and bulk pool water heat sink temperature $T_\infty$, the dimensionless reactivity entering point kinetics is
\begin{equation}
 \begin{aligned}
 \rho_\theta(x,u,w)=10^{-5}\big[&W_{{\rm fix},\theta}+u+w-\rho_b\\[-2pt]
 &+\alpha_T(T-T_\infty)+\alpha_X X\big].
 \end{aligned}
 \label{eq:reactivity}
\end{equation}
Let $\beta_i$ be the delayed neutron fractions, $\beta=\sum_{i=1}^6\beta_i$, $\Lambda$ the prompt generation time, and $\lambda_i$ the precursor decay constants. Let $\tau_T$ be the thermal time constant; $\lambda_I,\lambda_X$ the iodine and xenon decay constants; $\sigma_X$ the xenon absorption coefficient; and $k_I=y_I\kappa_f$, $k_X=y_X\kappa_f$, where $y_I,y_X$ are fission yields and $\kappa_f$ is the normalized fission rate factor. Then $f_\theta$ is
\begin{subequations}\label{eq:reactor_dynamics}
\begin{align}
 \dot p&=\frac{\rho_\theta(x,u,w)-\beta}{\Lambda}p
       +\sum\nolimits_{i=1}^{6}\lambda_i c_i, \label{eq:pke_p}\\
 \dot c_i&=\frac{\beta_i}{\Lambda}p-\lambda_i c_i,
       \quad i=1,\dots,6, \label{eq:pke_ci}\\
 \dot T&=(35p+T_\infty-T)/\tau_T, \label{eq:dyn_T}\\
 \dot I&=k_Ip-\lambda_I I, \label{eq:dyn_I}\\
 \dot X&=k_Xp+\lambda_I I-(\lambda_X+\sigma_Xp)X. \label{eq:dyn_X}
\end{align}
\end{subequations}
Dependence of the individual coefficients on $\theta$ is suppressed for readability.

\subsection{Rod worth, command processing, and sampled actuation}
The PUR-1 model contains four rod worth channels. The \textit{auxiliary moderate displacement rod} (AMDR) is the feedback actuator; the RR is normally fixed and is moved only as an independent anomaly input; and safety shim rods SS1 and SS2 remain fixed. For rod $j\in\{A,{\rm RR},{\rm SS1},{\rm SS2}\}$, let $q_j$ denote position in centimeters and
\begin{equation}
 W_{j,\theta}(q_j)=a_{j0}(\theta)+a_{j1}(\theta)q_j
 +a_{j2}(\theta)q_j^2+a_{j3}(\theta)q_j^3
 \label{eq:worthmap}
\end{equation}
its cubic worth map in pcm, with $a_{jr}(\theta)$ the polynomial coefficients. A superscript $0$ denotes a fixed or initial position. Thus $q_{\rm SS1}^{0}=54$ cm and $q_{\rm SS2}^{0}=56$ cm, and
\begin{equation}
 W_{{\rm fix},\theta}=W_{{\rm SS1},\theta}(q_{\rm SS1}^{0})
 +W_{{\rm SS2},\theta}(q_{\rm SS2}^{0}).
 \label{eq:fixedworth}
\end{equation}
Let $\bar\theta$ denote the nominal design coefficients and $\overline W_A(q):=W_{A,\bar\theta}(q)$ the calibration retained by the controller. It is strictly increasing on $0\le q\le64$ cm, so its inverse is single valued on the reachable worth interval. The sampled physical AMDR variations preserve this input sign.

At time $t_k=k\,\mathrm{s}$, let $x_k:=x(t_k)$ and $q_{j,k}:=q_j(t_k)$. Let $u_{r,k}$ denote the sampled reference AMDR worth constructed in Sec.~II-C and $v_k^\star$ the dimensionless residual command produced in Sec.~II-E. With $\sat_{[a,b]}(r):=\min\{b,\max\{a,r\}\}$, the actuator-facing command is
\begin{equation}
    v_k^{\rm raw}=\underset{[-v_{\max},v_{\max}]}{\sat}(v_k^\star), \;\; v_k^c=\alpha_fv_k^{\rm raw}+(1-\alpha_f)v_{k-1}^c, \label{eq:filter}
\end{equation}
where $v_{\max}=2$, $\alpha_f=0.45$, and one unit of $v$ is $s_v=10$ pcm. The filtered total worth request and corresponding rod position target are $u_k^{\rm targ}=u_{r,k}+s_vv_k^c$ and $q_{A,k}^{\rm targ}=\mathcal I_A(u_k^{\rm targ})$, where $\mathcal I_A$ inverts $\overline W_A$ when the requested worth is reachable and otherwise returns the nearest travel boundary. The map $\mathcal A_A$ then applies target deadband, rate and travel limits, and 0.1 cm quantization. Let $q_{A,k}^{+}$ and $q_{R,k}^{+}$ be the post-command positions held on $[t_k,t_{k+1})$, and let $F_\theta$ be the one-second flow map of~\eqref{eq:plant}. The sampled physical interconnection is
\begin{align}
 q_{A,k}^{+}&=\mathcal A_A(q_{A,k},q_{A,k}^{\rm targ}),
 &u_k^{\rm ind}&=\overline W_A(q_{A,k}^{+}),\nonumber\\
 u_k&=W_{A,\theta}(q_{A,k}^{+}),
 &w_k&=W_{{\rm RR},\theta}(q_{R,k}^{+}),\nonumber\\
 x_{k+1}&=F_\theta(x_k,u_k,w_k),
 &q_{A,k+1}&:=q_{A,k}^{+}.
 \label{eq:actchain}
\end{align}
The RR memory is updated analogously; normally $q_{R,k}^{+}=q_{R,k}$. The physical plant receives $u_k$, whereas adaptation uses the delivered, nominally indicated residual
\begin{equation}
 v_k^a=(u_k^{\rm ind}-u_{r,k})/s_v.
 \label{eq:deliveredresidual}
\end{equation}
\begin{remark}[Actuator cascade]
The effective system seen by adaptation includes limiting, filter memory, nominal worth inversion, rod realization, the physical worth map, and reactor dynamics. Calibration mismatch is compensable only to the extent that it preserves sign and authority and is representable through the AMDR input direction; nonlinear mismatch may instead create time-varying effective coefficients. Clipping, filtering, rate and travel limits, deadband, and quantization alter command realization and remain outside the fixed matched-input result, so their effects are evaluated numerically.
\end{remark}

\subsection{Compatible physical reference trajectory}
Upstream of the adaptive control mechanics, it is important to differentiate between the external power command $p_c$ (potentially piecewise continuous), desired power $p_r$, physical reference $(x_r,u_r)$, and the tracking error dynamics. In routine load following, an upstream scheduler or supervisor should provide a trajectory compatible with operating limits, ramp capability, and mission objectives; MRAC--LQR regulates to that trajectory rather than constructing it. Higher level reference governor and learning-based trajectory generation have been considered for nuclear systems~\cite{dave2023design,pylorof2022reinforcement,pylorof2026unified}. The step commands used in this work are diagnostics. We either set $p_r=p_c$ or pass $p_c$ through a fixed stable filter $G_r$ selected offline as a guard against discontinuous requests. Unlike a formal reference governor, $G_r$ does not recompute the command from state and constraint predictions~\cite{Gilbert1995,dave2023design}.

Let $f_{\bar\theta}$ be the nominal reactor vector field, $q_R^{0}$ the fixed reference RR position, and $w_r:=W_{{\rm RR},\bar\theta}(q_R^{0})$. The physical reference reactor satisfies
\begin{equation}
 \dot x_r=f_{\bar\theta}(x_r,u_r,w_r),
 \qquad e_p^{\mathsf T}x_r=p_r.
 \label{eq:referenceplant}
\end{equation}
The non-power components of $x_r$ follow~\eqref{eq:pke_ci}--\eqref{eq:dyn_X} with nominal coefficients. Inverse kinetics gives
\begin{equation}
 \rho_r=\beta+\frac{\Lambda}{p_r}
 \left(\dot p_r-\sum_{i=1}^{6}\lambda_i c_{i,r}\right),
 \label{eq:rhoref}
\end{equation}
and the compatible reference AMDR worth is
\begin{equation}
 u_r=10^5\rho_r-W_{{\rm fix},\bar\theta}-w_r+\rho_b
 -\alpha_T(T_r-T_\infty)-\alpha_X X_r,
 \label{eq:uref}
\end{equation}
with nominal coefficients on the right-hand side. The corresponding reference position is $q_{A,r}=\overline W_A^{-1}(u_r)$. At $t_k$, define $x_{r,k}:=x_r(t_k)$ and $u_{r,k}:=u_r(t_k)$. Hence $p_c\mapsto p_r\mapsto(x_r,u_r,q_{A,r})$ defines the moving physical reference.

\subsection{Tracking coordinates and nominal error model}
Let $x_{\rm eq}^{\rm nom}=[p_{\rm eq},c_{1,{\rm eq}},\ldots,c_{6,{\rm eq}},T_{\rm eq},I_{\rm eq},X_{\rm eq}]^{\mathsf T}$ be the nominal equilibrium at the initial power level, and define
\begin{equation}
 S_x=\operatorname{diag}\!\left(
 p_{\rm eq}^{-1},c_{1,{\rm eq}}^{-1},\ldots,c_{6,{\rm eq}}^{-1},
 0.1,I_{\rm eq}^{-1},X_{\rm eq}^{-1}\right),
 \label{eq:scaling}
\end{equation}
where the temperature entry is in K$^{-1}$. The physical tracking coordinate and its integral augmentation are
\begin{equation}
 z_k:=S_x(x_k-x_{r,k})\in\R^{10},\qquad
 \bar z_k:=\begin{bmatrix}z_k\\ \eta_k\end{bmatrix}\in\R^{11},
 \label{eq:trackingerror}
\end{equation}
where $\eta_k$ is the bounded integral of scaled power error and, when antiwindup is inactive, $\eta_{k+1}=\eta_k+e_p^{\mathsf T}z_k$.

Linearizing the nominal plant/reference pair in the sampled residual coordinate $v_k=(u_k^{\rm ind}-u_{r,k})/s_v$ and applying exact one-second zero-order-hold discretization gives
\begin{equation}
 z_{k+1}=A_pz_k+B_pv_k,
 \quad A_p\in\R^{10\times10},\quad B_p\in\R^{10}.
 \label{eq:nominalerror}
\end{equation}
The augmented pair is
\begin{equation}
 A_0=\begin{bmatrix}A_p&0\\e_p^{\mathsf T}&1\end{bmatrix},\;
 B_0=\begin{bmatrix}B_p\\0\end{bmatrix},
 \; \bar z_{k+1}=A_0\bar z_k+B_0v_k.
 \label{eq:augpair}
\end{equation}
All controllers use
\begin{equation*}
 Q=\operatorname{diag}\!\left(20,\underbrace{0.7,\ldots,0.7}_{6},2,0.03,0.08,0.10\right)\in\R^{11\times11},
\end{equation*}
$R=[4]\in\R_{>0}$, $K_0=\operatorname{dlqr}(A_0,B_0,Q,R)\in\R^{1\times11}$, $B_m=B_0$, and $A_m=A_0+B_0K_0$, where $\operatorname{dlqr}$ denotes the discrete time LQR gain and $A_m$ is Schur. The entries of $Q$ weight, in order, scaled power, six precursors, temperature, iodine, xenon, and integral error. The Jacobian and $K_0$ are evaluated at 10\% power except for the 40\% multilevel maneuver; $K_0$ is fixed within each comparison.

\subsection{MRAC--LQR on the tracking system}
We apply the Fisher--Annaswamy MRAC--LQR algorithm~\cite{Fisher2026} to the tracking error system. Let $\widehat a_k\in\R^{11}$, $\widehat b_k\in\R$, and $\widehat g_k\in\R$ denote its adaptive feedback, input effectiveness, and matched bias coefficients. They are effective control quantities, not real-time estimates of the physical coefficients $\theta\in\Theta$. Let $\bar z_k^{\rm fb}$ denote the coordinate supplied by the feedback path: $\bar z_k^{\rm fb}=\bar z_k$ under full state feedback and, under power-only feedback, it is the reconstructed coordinate defined in Sec.~II-F. We initialize $(\widehat a_0,\widehat g_0,\widehat b_0)=(K_0^{\mathsf T},0,1)$ and project componentwise to
\begin{equation}
 \begin{aligned}
 K_0^{\mathsf T}-2\bm{1}_{11}&\le\widehat a_k\le K_0^{\mathsf T}+2\bm{1}_{11},\\
 0.30&\le\widehat b_k\le1.80,\qquad |\widehat g_k|\le1.
 \end{aligned}
 \label{eq:parameterbox}
\end{equation}
where $\bm{1}_{11}$ is the vector of ones. At redesign epoch $\ell$, let $\Delta_{m,\ell}\in\R^{11}$ be the epoch correction, initially zero. The unconstrained residual worth command is
\begin{equation}
 v_k^\star=\widehat b_k^{-1}
 \left[(\widehat a_k+\Delta_{m,\ell})^{\mathsf T}\bar z_k^{\rm fb}-\widehat g_k\right].
 \label{eq:control}
\end{equation}
The integral state satisfies $|\eta_k|\le0.50$ and is frozen under active AMDR rate or travel limitation; exploratory excitation is disabled. Define $\widehat\psi_k=[\widehat a_k^{\mathsf T}\ \widehat g_k\ \widehat b_k]^{\mathsf T}\in\R^{13}$, $\phi_k=[-(\bar z_k^{\rm fb})^{\mathsf T}\ 1\ v_k^a]^{\mathsf T}\in\R^{13}$, and $\zeta_{k+1}:=B_m^\dagger(\bar z_{k+1}^{\rm fb}-A_m\bar z_k^{\rm fb})\in\R$, where $\dagger$ denotes the Moore--Penrose pseudoinverse. With Euclidean projection $\proj$ onto~\eqref{eq:parameterbox}, the fast update is
\begin{equation}
 \widehat\psi_{k+1}=\proj\!\left[
 \widehat\psi_k+
 \frac{(\zeta_{k+1}-\widehat\psi_k^{\mathsf T}\phi_k)\phi_k}
 {\max\{1,\norm{\phi_k}^2\}}\right].
 \label{eq:update}
\end{equation}
Equation~\eqref{eq:update} updates the applied coefficients at 1~Hz within prescribed bounds. Recursive least squares (RLS) runs at the same rate, maintaining a separate parameter estimate and information matrix from the observed transitions. With the selected box projection, the RLS estimate enters neither~\eqref{eq:control} nor~\eqref{eq:update}. Its accumulated information gates LQR redesign, which is eligible only after an information increment of $2\times10^{-4}$ and a minimum age of $900(\ell+1)$ samples. A candidate redesign uses $(\widehat a_k,\widehat b_k)$ and must pass a Schur stability check on the resulting linear closed loop. Thus ``slower'' refers to gated redesign, while both learning recursions run at 1~Hz. In the plant family runs, no redesign is accepted: $\Delta_{m,\ell}=0$ and the nominal LQR design is retained, while direct adaptation changes the effective feedback law.

Before each plant run, $\widehat\psi_0$, $\eta_0$, filter state, and RLS/epoch memory are reset. In the fixed-nominal pair,~\eqref{eq:control}, integral action, filtering, rod realization, and the nonlinear plant remain active, while~\eqref{eq:update}, RLS, and redesign are skipped; hence $\widehat\psi_k=\widehat\psi_0$ and $\Delta_{m,\ell}=0$. This isolates online adaptation from nominal feedback and actuator dynamics.

The complete sampled loop therefore includes more state than~\eqref{eq:plant}. With $\operatorname{col}(\cdot)$ denoting column concatenation and $\mathcal R_k$ the RLS/epoch memory, define
\begin{equation}
 \chi_k=\operatorname{col}(x_k,x_{r,k},q_{A,k},q_{R,k},
 \eta_k,v_{k-1}^c,\widehat\psi_k,\mathcal R_k).
 \label{eq:closedloopstate}
\end{equation}
With $p_{c,k}:=p_c(t_k)$ and $\nu_k:=\nu(t_k)$, let $\omega_{R,k}$ be an exogenous RR motion request, zero in normal operation. Then $\chi_{k+1}=\mathcal F_{{\rm cl},\theta}(\chi_k,p_{c,k},\nu_k,\omega_{R,k})$ for a closed-loop map $\mathcal F_{{\rm cl},\theta}$. Power-only operation augments $\chi_k$ with $\widehat x_k$. MRAC--LQR acts on $\bar z_k^{\rm fb}=\bar z^{\rm fb}(\chi_k)$ induced by this interconnection, not directly on the isolated state $x_k$.

\subsection{Power-only state reconstruction}
Reactor instrumentation typically does not directly measure many of the states in $x$. With power-only feedback, measured power supplies the first state and the remaining nine states are propagated from reactor physics. Over each completed one-second interval, let $y(t)$ linearly interpolate the endpoint measurements. The estimates satisfy
\begin{align}
 \dot{\widehat c}_i&=\frac{\beta_i}{\Lambda}y-\lambda_i\widehat c_i,
 &\dot{\widehat T}&=\frac{35y+T_\infty-\widehat T}{\tau_T},\nonumber\\
 \dot{\widehat I}&=k_Iy-\lambda_I\widehat I,
 &\dot{\widehat X}&=k_Xy+\lambda_I\widehat I-(\lambda_X+\sigma_Xy)\widehat X.
 \label{eq:observer}
\end{align}
Thus $\widehat x=[y,\widehat c_1,\ldots,\widehat c_6,\widehat T,\widehat I,\widehat X]^{\mathsf T}$, $\widehat z_k=S_x(\widehat x_k-x_{r,k})$, and $\widehat{\bar z}_k=[\widehat z_k^{\mathsf T}\ \eta_k]^{\mathsf T}$. Consequently, $\bar z_k^{\rm fb}=\widehat{\bar z}_k$ in power-only operation. This is a physics propagation observer, \textit{not} a Luenberger observer.

For estimate-minus-truth errors, e.g., $\widetilde X=\widehat X-X$, subtraction of~\eqref{eq:reactor_dynamics} from~\eqref{eq:observer} gives
\begin{equation}
 \begin{aligned}
 \dot{\widetilde c}_i&=-\lambda_i\widetilde c_i+(\beta_i/\Lambda)\nu, \;\;\;\;
 \dot{\widetilde T}=-\widetilde T/\tau_T+35\nu/\tau_T,\\
 \dot{\widetilde I}&=-\lambda_I\widetilde I+k_I\nu,\\
 \dot{\widetilde X}&=\lambda_I\widetilde I-\lambda_X\widetilde X+k_X\nu
 -\sigma_X(y\widetilde X+X\nu).
 \end{aligned}
 \label{eq:observererror}
\end{equation}
With exact power measurement and consistent initialization, zero estimation error is invariant. Precursor and temperature errors decay much faster than iodine and xenon errors. Three initialization levels are used below, always with the physical reactor at its true 10\% equilibrium and measured power initialized exactly. The \emph{warm} case sets all nine hidden states to their true equilibrium values. The \emph{intermediate} (semi-warm) case keeps the six precursors, temperature, and iodine exact but sets $\widehat X_0=10X_0$, isolating a large slow-state error. The \emph{observer-cold} case sets $\widehat c_{i,0}=\widehat I_0=\widehat X_0=0$ and $\widehat T_0=T_\infty$; ``cold'' refers to estimator memory, not the physical reactor. Because the reconstructed coordinate enters both~\eqref{eq:control} and~\eqref{eq:update}, the full-state MRAC--LQR result does not provide a separation principle for the coupled observer/adaptive loop.
\section{Exact Error Decomposition and Theory Boundary}
\subsection{Moving-reference sampled dynamics}
For the physical reference in Sec.~II-C, load following is regulation of the tracking coordinate $z_k$ in~\eqref{eq:trackingerror} about zero. Combining the sampled rod/plant interconnection~\eqref{eq:actchain}, the nominal reference reactor~\eqref{eq:referenceplant}, and the delivered residual~\eqref{eq:deliveredresidual}, the smooth one-second plant/reference flow admits an exact incremental factorization along the realized transition:
\begin{equation}
 z_{k+1}=A_k z_k+B_kv_k^a+d_k.
 \label{eq:sampled}
\end{equation}
Here $A_k\in\R^{10\times10}$ is the realized state-error transition, $B_k\in\R^{10}$ maps the delivered, nominally indicated AMDR residual into the next error, and $d_k\in\R^{10}$ collects the remaining increment, including the independent reactivity deviation $w_k-w_r$. These are effective transition quantities induced by the nonlinear interconnection and may vary with operating point, reference motion, actuator realization, and the connected plant. Equation~\eqref{eq:sampled} is therefore \textit{not} the fixed linear time-invariant model assumed by the MRAC--LQR result~\cite{Fisher2026}. Partition $K_0=[K_z\ k_\eta]$, with $K_z\in\R^{1\times10}$ and $k_\eta\in\R$, and define the ten-state comparator $A_{m,p}:=A_p+B_pK_z$. Let $P_B:=B_pB_p^\dagger$ be the orthogonal projector onto $\operatorname{im}B_p$, $I_{10}$ the identity, and $\Delta A_k:=A_k-A_{m,p}$. Let $a_k^\star\in\R^{10}$, $b_k^\star,g_k^\star\in\R$, $E_{A,k}\in\R^{10\times10}$, and $E_{B,k},d_k^\perp\in\R^{10}$ be defined by the exact projections
\begin{align}
 a_k^{\star\mathsf T}&=-B_p^\dagger\Delta A_k,
 &E_{A,k}&=(I_{10}-P_B)\Delta A_k,\nonumber\\
 b_k^\star&=B_p^\dagger B_k,
 &E_{B,k}&=(I_{10}-P_B)B_k,\nonumber\\
 g_k^\star&=B_p^\dagger d_k,
 &d_k^\perp&=(I_{10}-P_B)d_k
 \label{eq:projection}
\end{align}
produce
\begin{equation}
 \begin{aligned}
 A_k&=A_{m,p}-B_pa_k^{\star\mathsf T}+E_{A,k},\\
 B_k&=B_pb_k^\star+E_{B,k},\qquad
 d_k=B_pg_k^\star+d_k^\perp.
 \end{aligned}
 \label{eq:decomp}
\end{equation}
The starred quantities are instantaneous control-referenced projections, not physical entries of $\theta$. If $b_k^\star$ has known sign and is bounded away from zero, define
\begin{equation}
 v_k^{\rm id}:=(b_k^\star)^{-1}
 \left(a_k^{\star\mathsf T}z_k-g_k^\star\right).
 \label{eq:ideal}
\end{equation}

\begin{proposition}[Exact matched/unmatched separation]
Definitions~\eqref{eq:projection}--\eqref{eq:ideal} give
\begin{equation}
 z_{k+1}=A_{m,p}z_k+B_pb_k^\star(v_k^a-v_k^{\rm id})+r_k^\perp,
 \label{eq:coreidentity}
\end{equation}
where $r_k^\perp:=E_{A,k}z_k+E_{B,k}v_k^a+d_k^\perp$ and $B_p^\dagger r_k^\perp=0$.
\end{proposition}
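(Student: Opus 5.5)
The plan is a direct algebraic substitution, with one preliminary fact about the pseudoinverse. First I would verify that the definitions in~\eqref{eq:projection} really produce the decomposition~\eqref{eq:decomp}. Since $P_B=B_pB_p^\dagger$, for any matrix or vector $M$ of compatible size we have $M=P_BM+(I_{10}-P_B)M=B_p(B_p^\dagger M)+(I_{10}-P_B)M$.

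Applying this identity to $M=\Delta A_k$, $M=B_k$, and $M=d_k$ gives the three lines of~\eqref{eq:decomp}:
\begin{itemize}
\item for $\Delta A_k$, $\Delta A_k=-B_pa_k^{\star\mathsf T}+E_{A,k}$, so $A_k=A_{m,p}-B_pa_k^{\star\mathsf T}+E_{A,k}$;
\item for $B_k$, $B_k=B_pb_k^\star+E_{B,k}$;
\item for $d_k$, $d_k=B_pg_k^\star+d_k^\perp$.
\end{itemize}

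Next I would substitute these three expressions into the exact sampled factorization~\eqref{eq:sampled}:
\[
z_{k+1}=A_{m,p}z_k+B_p\big(-a_k^{\star\mathsf T}z_k+b_k^\star v_k^a+g_k^\star\big)+E_{A,k}z_k+E_{B,k}v_k^a+d_k^\perp .
\]
The last three terms are exactly $r_k^\perp$. For the bracket, the hypothesis that $b_k^\star$ is bounded away from zero lets me factor it out:
\[
-a_k^{\star\mathsf T}z_k+b_k^\star v_k^a+g_k^\star=b_k^\star\big(v_k^a-(b_k^\star)^{-1}(a_k^{\star\mathsf T}z_k-g_k^\star)\big)=b_k^\star(v_k^a-v_k^{\rm id}),
\]
using~\eqref{eq:ideal}. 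This yields~\eqref{eq:coreidentity}. Since $b_k^\star$ is a scalar, it commutes with $B_p$.

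Finally, I would show $B_p^\dagger r_k^\perp=0$. The Moore--Penrose identity $B_p^\dagger B_pB_p^\dagger=B_p^\dagger$ gives $B_p^\dagger(I_{10}-P_B)=B_p^\dagger-B_p^\dagger B_pB_p^\dagger=0$. Each of the three summands of $r_k^\perp$ has the form $(I_{10}-P_B)(\cdot)$, so each is annihilated by $B_p^\dagger$.

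I expect no genuine obstacle here. The only point needing care is that $B_p\neq0$, so that $B_p^\dagger=B_p^{\mathsf T}/\norm{B_p}^2$ and $P_B$ is the rank-one orthogonal projector. This holds because AMDR reactivity enters the power equation with nonzero effect after ZOH discretization. The division in~\eqref{eq:ideal} is covered by the stated assumption on $b_k^\star$. It should also be stressed that the identity is exact only because~\eqref{eq:sampled} is taken as given; all approximation content lives in the definitions of $A_k,B_k,d_k$, not in this proposition.
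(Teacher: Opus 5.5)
Your proposal is correct and matches the paper's proof: you substitute~\eqref{eq:decomp} into~\eqref{eq:sampled} and regroup the matched terms using~\eqref{eq:ideal}, and factoring out $b_k^\star$ is the same step as the paper's ``add and subtract $B_pb_k^\star v_k^{\rm id}$.'' You also make explicit two checks the paper leaves implicit, namely that~\eqref{eq:projection} yields~\eqref{eq:decomp} and that $B_p^\dagger(I_{10}-P_B)=0$; note that $B_p^\dagger B_pB_p^\dagger=B_p^\dagger$ holds for any matrix, so $B_p\neq0$ is not needed for that step and in any case follows from $b_k^\star\neq0$.
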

\begin{proof}
Substitute~\eqref{eq:decomp} into~\eqref{eq:sampled}, add and subtract $B_pb_k^\star v_k^{\rm id}$, and use~\eqref{eq:ideal}.
\end{proof}

\begin{remark}[Why the projection matters]
The decomposition is algebraically elementary, but the projection is onto $\operatorname{im}B_p$, the only direction through which the scalar AMDR residual can change the nominal physical tracking dynamics. If $r_k^\perp=0$ and $b_k^\star\neq0$, $v_k^{\rm id}$ is the unique delivered residual that realizes $z_{k+1}=A_{m,p}z_k$; if $r_k^\perp\neq0$, no scalar AMDR input cancels that component at the same sample. Thus~\eqref{eq:coreidentity} distinguishes uncertainty structure from input realization and prevents interpreting $a_k^\star,b_k^\star,g_k^\star$ as physical parameter estimates. Saturation or deadband may leave $r_k^\perp$ small yet enlarge $v_k^a-v_k^{\rm id}$. Fisher and Annaswamy's Theorem~2~\cite{Fisher2026} is proved for the fixed discrete-time LTI plant in their Eq.~(1) under Assumptions~1--2: full state measurement, exact matching to a known Schur pair, bounded and invertible input effectiveness, and direct application of the computed input. The directly covered, bias-free counterpart in our notation has constant $a_k^\star,b_k^\star$, $g_k^\star=0$, $r_k^\perp=0$, $\widehat z_k=z_k$, and $v_k^a=v_k^\star$; the matched bias and actuator cascade terms are additional.
\end{remark}

\subsection{Implementation residuals and conditional bound}
Let $v^\star(\bar z,\widehat\psi)$ denote~\eqref{eq:control}. The delivered input mismatch in~\eqref{eq:coreidentity} has the exact telescoping split
\begin{align}
 v_k^a-v_k^{\rm id}
 =&\underbrace{v^\star(\bar z_k,\widehat\psi_k)-v_k^{\rm id}}_{\delta_k^{\rm par}}
 +\underbrace{v^\star(\widehat{\bar z}_k,\widehat\psi_k)-v^\star(\bar z_k,\widehat\psi_k)}_{\delta_k^{\rm obs}}\nonumber\\
 &\!\!\!\!\!+\underbrace{v_k^{\rm raw}-v^\star(\widehat{\bar z}_k,\widehat\psi_k)}_{\delta_k^{\rm mag}}
 +\underbrace{v_k^c-v_k^{\rm raw}}_{\delta_k^{\rm filt}}
 +\underbrace{v_k^a-v_k^c}_{\delta_k^{\rm act}}.
 \label{eq:residualsplit}
\end{align}
These terms represent \textit{adaptive parameter}, \textit{observer}, \textit{magnitude limit}, \textit{filter}, and \textit{command realization} errors. Physical worth map mismatch instead changes $B_k$ and therefore $b_k^\star$ and $E_{B,k}$; it is not hidden inside $\delta_k^{\rm act}$. To make the last term explicit, let $d_A\ge0$ be the target deadband, $D_d(r)=0$ for $|r|<d$ and $D_d(r)=r$ otherwise, $\Pi_{[0,64]}$ the travel projection, and $\mathcal Q_{0.1}$ the 0.1 cm quantizer. With requested increment $\Delta q_{A,k}^{\rm targ}:=q_{A,k}^{\rm targ}-q_{A,k}$ and delivered pre-quantization increment $\Delta q_{A,k}^{\rm del}$, we obtain $\Delta q_{A,k}^{\rm del}=\sat_{[-5,5]}\!\left(D_{d_A}(\Delta q_{A,k}^{\rm targ})\right)$ and $q_{A,k}^{+}=\mathcal Q_{0.1}\!\left(\Pi_{[0,64]}(q_{A,k}+\Delta q_{A,k}^{\rm del})\right)$. Rate limiting adds actuator memory; travel limits, deadband, and quantization create regions where a change in $v_k^\star$ is not transmitted to $v_k^a$. The claim of Proposition~1 remains exact, but the applied-as-computed premise does not.

Equations~\eqref{eq:coreidentity} and~\eqref{eq:residualsplit} form a comparator system driven by one aggregate residual. Define
\begin{equation}
 \xi_k:=B_pb_k^\star(\delta_k^{\rm par}+\delta_k^{\rm obs}
 +\delta_k^{\rm mag}+\delta_k^{\rm filt}+\delta_k^{\rm act})+r_k^\perp,
 \label{eq:xi}
\end{equation}
so that $z_{k+1}=A_{m,p}z_k+\xi_k$. Iterating this identity gives a conditional input-to-state-style comparison: if $\norm{A_{m,p}^j}\le C_m\gamma^j$ for constants $C_m>0$, $0<\gamma<1$, and if the closed-loop residual satisfies $\norm{\xi_k}\le\bar\xi$, then
\begin{equation}
 \norm{z_k}\le C_m\gamma^k\norm{z_0}+\frac{C_m}{1-\gamma}\bar\xi.
 \label{eq:bound}
\end{equation}
This result is conditional because $\xi_k$ is not exogenous; it depends on adaptation, observation, command processing, actuation, and the nonlinear plant. The bound~\eqref{eq:bound} is therefore an accounting device rather than a nonlinear adaptive stability theorem: it connects the residual decomposition to the saturation, plant variation, observer, deadband, and quantization tests.

\begin{figure*}[!t]
\centering
\includegraphics[width=0.90\textwidth]{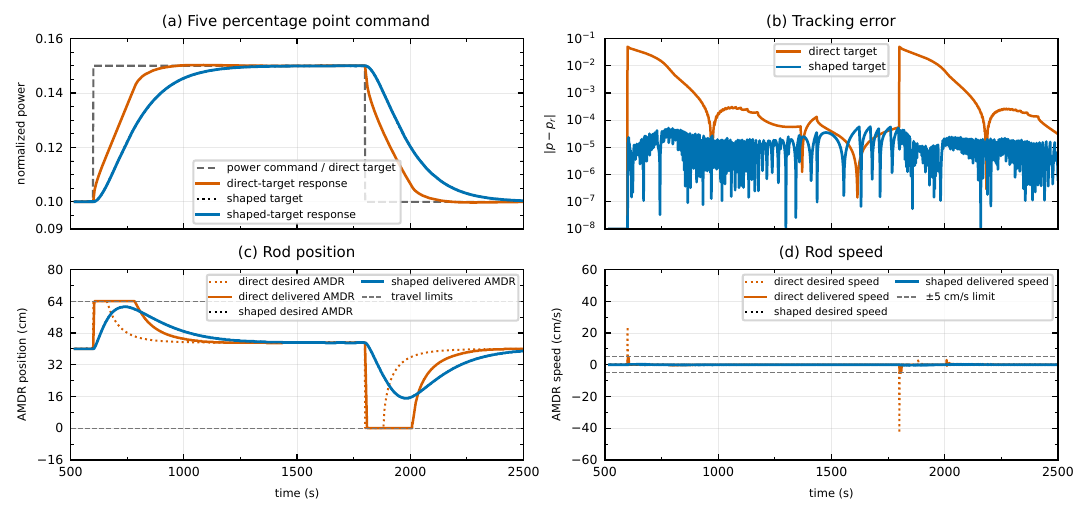}
\caption{Direct and shaped targets for the same $10\%\rightarrow15\%\rightarrow10\%$ command. The direct physical reference is infeasible under the modeled rod speed and travel; the shaped reference remains feasible and reduces tracking error and total rod mileage.}
\label{fig:b7}
\end{figure*}

\section{Evaluation Protocol}
All simulations use the nonlinear ten-state plant, LSODA integration, and a 1 Hz control/actuation clock. Metrics include power root mean square error (RMSE), peak error, settling time, limit exposure, and rod activity. Over $N$ samples, total AMDR mileage is calculated as $M_A:=\sum_{k=0}^{N-1}|q_{A,k+1}-q_{A,k}|$. The reversal count $N_{\rm rev}$ is the number of sign changes in nonzero AMDR motion. Both are actuator duty proxies rather than life predictions, but they relate directly to wear, maintenance exposure, and downstream risk. Our evaluations compare direct and shaped $10\%\rightarrow15\%\rightarrow10\%$ commands; 30 cases in overlapping 9- and 21-case parameter sets; paired adaptive and fixed-nominal runs; power-only feedback and observer initialization; and deadband widths $d_A=0.2$ cm and $d_A=0$. For unexpected RR motion, the 10\% equilibrium is $q_R^{0}=9.4287$ cm; at 600 s the RR moves at no more than $0.9167$ cm/s to $q_R^{0}+\Delta q_R$, $\Delta q_R\in\{-2,-1,1,2\}$ cm, and remains there.

For plant case $j$, let $x_0^{(j)}$ be its equilibrium and $x_{r,0}$ the reference equilibrium. The comparison uses
\begin{equation}
 z_k^{(j)}=S_x\!\left[(x_k^{(j)}-x_0^{(j)})-(x_{r,k}-x_{r,0})\right].
 \label{eq:incremental}
\end{equation}
For the plant family comparisons, online updates are paused during the equilibrium prelude, enabled at the maneuver in adaptive runs, and controller memory is reset before each new plant; execution order therefore transfers no learning. The same $10\%\rightarrow9\%\rightarrow10\%$ case is repeated for each plant $j$, with the same target system, nominal LQR gain, integral action, feedback, and actuator logic. The 9-case reactivity channel grid scales $\alpha_T$ and $\alpha_X$ together by $\{0.8,1.0,1.2\}$ and AMDR differential worth independently by $\{0.9,1.0,1.1\}$. The 21-case expanded set varies $\Lambda$ and thermal gain by $\pm2\%$, $\beta$ and $\lambda_i$ by $\pm1\%$, $\tau_T$ by $\pm5\%$, $\alpha_T$ and $\alpha_X$ by $\pm10\%$, $k_I,k_X,\lambda_I,$ and $\lambda_X$ by $\pm2\%$, $\sigma_X$ by $\pm1\%$, and AMDR/RR differential worth by $\pm5\%$; 16 interior combinations are fixed Latin hypercube samples. Coefficients are fixed within a run. These are simulation ranges, not measured PUR-1 uncertainty bounds or licensing limits.

\section{Results}
\subsection{Reference feasibility and the larger maneuver}
Figure~\ref{fig:b7} compares direct and shaped targets for the same diagnostic $10\%\rightarrow15\%\rightarrow10\%$ command. The direct target requests the full change in one sample: desired AMDR speed reaches 43.1 cm/s, the delivered rod is rate limited for 12 s and travel limited for 391 s, and RMSE is $1.098\times10^{-2}$. The simulated loop nevertheless remains bounded, reaches the high-power plateau, and returns without persistent loss of regulation--an empirical result, not a safety or theorem claim. The shaped target, with natural frequency $\omega_n=0.01$ rad/s, requests 0.29 cm/s, reaches neither limit, lowers RMSE to $1.835\times10^{-5}$, and reduces mileage from 129.8 to 93.1 cm; its slower 585 s settling avoids sustained saturation. In~\eqref{eq:coreidentity}, the direct case primarily enlarges $v_k^a-v_k^{\rm id}$ through $\delta_k^{\rm act}$, whereas shaping keeps the requested reference within available rod authority.


\begin{figure*}[!t]
\centering
\includegraphics[width=0.95\textwidth]{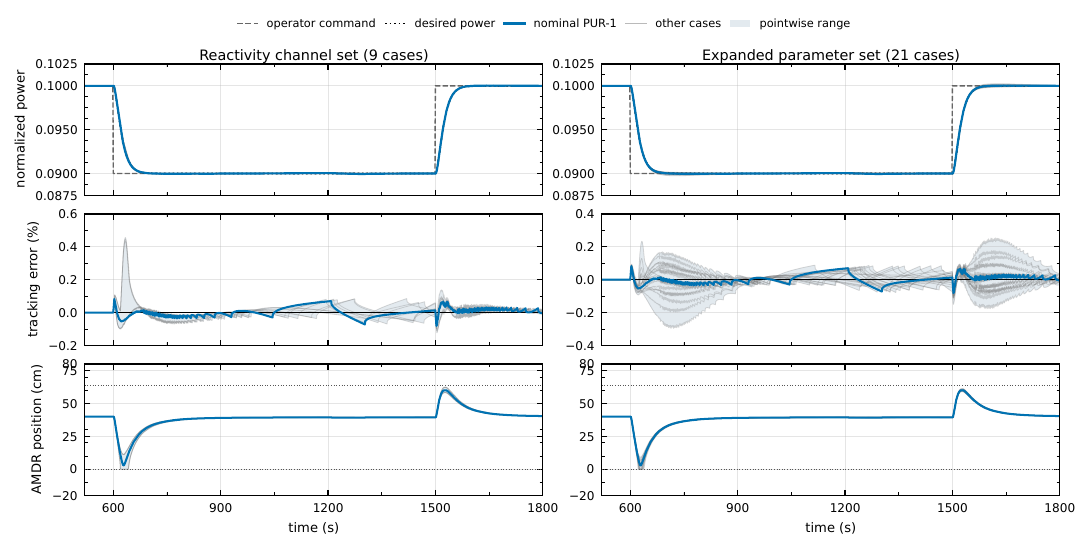}
\caption{Complete adaptive response envelopes for the selected shaped $10\%\rightarrow9\%\rightarrow10\%$ maneuver using~\eqref{eq:incremental} and the parameter ranges in Sec.~IV. The nominal PUR-1 model is emphasized, all other cases remain visible, and shading gives the pointwise range. Similar power histories require different AMDR histories, which our MRAC-LQR implementation effectively utilizes to yield a practically indistinguishable power response across all considered nuclear systems.}
\label{fig:families}
\end{figure*}

\subsection{One nominal design across 30 plant cases}
Figure~\ref{fig:families} shows the adaptive deployment runs for both parameter sets; it is not a frozen-controller robustness envelope. In the 9-case reactivity channel set, settling is 74--76 s, maximum pointwise power spread is $4.74\times10^{-4}$, simultaneous AMDR spread is 16.2 cm, and mileage is 94.7--125.9 cm. For the 21-case expanded set, the corresponding values are 67--191 s, $4.18\times10^{-4}$, 12.5 cm, and 100.5--123.7 cm. Narrow power envelopes with separated rod histories make the adaptive premise observable: one fixed design-time architecture produces comparable output behavior through online, plant-dependent actuation histories. With no accepted LQR redesign, the paired fixed nominal reruns isolate direct adaptation in~\eqref{eq:update} without changing the plant, reference, feedback calculation, integral action, or actuator software. Adaptation lowers RMSE in all nine reactivity channel cases and in 19 of 21 expanded cases; across the overlapping sets, 28 of 30 pairs improve over 29 unique models. The two exceptions remain regulated but favor fixed nominal parameters. This neither proves $r_k^\perp=0$ nor identifies physical coefficients; it shows that matched online adjustment empirically dominates the remaining variation in most sampled cases and supports the adaptive control interpretation of robustness across the stated parameter ranges.


\begin{figure*}[!t]
\centering
\includegraphics[width=0.92\textwidth]{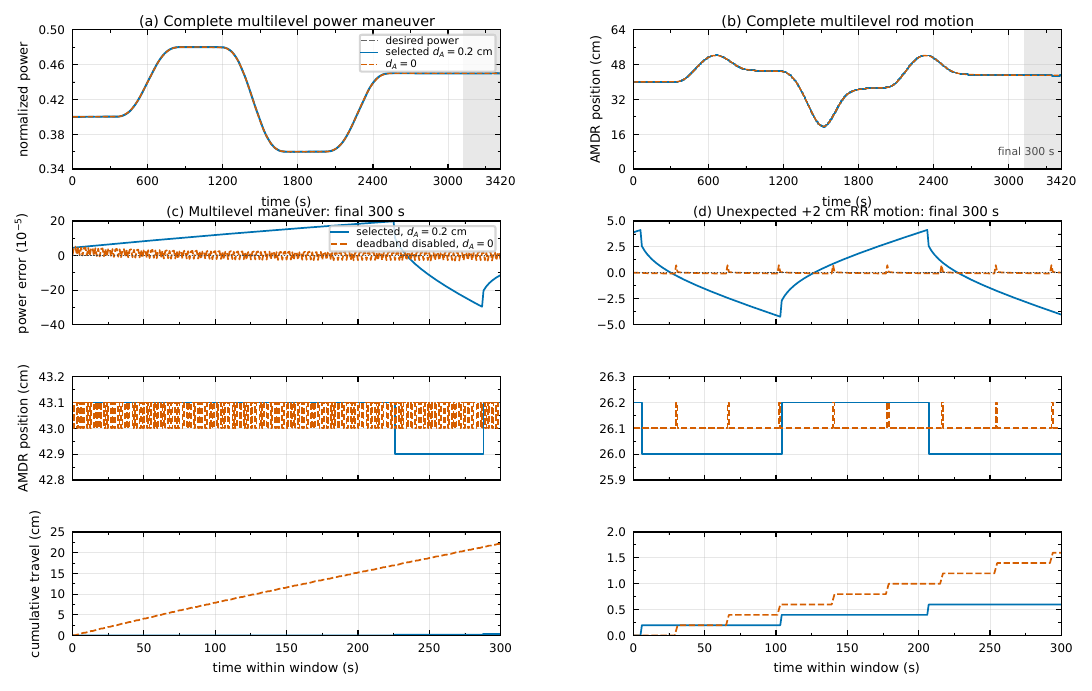}
\caption{Deadband ablation with all other settings fixed. The top row shows the complete multilevel $40\%\rightarrow48\%\rightarrow36\%\rightarrow45\%$ maneuver and rod motion; the shaded interval is expanded below at left. The lower-right column gives the final 300 s after an unexpected, held $+2$ cm RR displacement. Removing the deadband reduces quantization-scale power error but produces frequent 0.1 cm corrections and additional cumulative rod travel.}
\label{fig:deadband_mileage}
\end{figure*}

\subsection{Deadband, precision, and actuator duty}
Figure~\ref{fig:deadband_mileage} first shows the complete $40\%\rightarrow48\%\rightarrow36\%\rightarrow45\%$ maneuver, which exercises both regulation directions over a wider operating range. The selected and no-deadband power responses are nearly identical and neither reaches a rate or travel limit. In the final 300 s, however, removing the deadband raises mileage from 0.4 to 22.2 cm. Over the full maneuver, mileage rises from 87.1 to 115.0 cm and reversals from 4 to 287, while RMSE falls from $9.62\times10^{-5}$ to $2.26\times10^{-5}$. After the persistent $+2$ cm RR displacement, both settings recover in 5 s, but mileage rises from 16.8 to 31.1 cm and reversals from 10 to 160 without the deadband. In~\eqref{eq:coreidentity}, deadband intentionally holds $v_k^a$ fixed for subthreshold corrections, replacing exact small-signal matching by a neighborhood set by deadband and quantization. Delivered input regression avoids treating an undelivered request as plant input, and parameter projection keeps the adaptive coefficients bounded. The multilevel test then supplies the empirical check: no accumulating error, delayed large correction, or escalating rod excursion appears, while quantization-driven switching is strongly suppressed. The mileage and reversal reductions matter because otherwise negligible corrections add motor and drivetrain duty, maintenance exposure, and sensitivity to backlash, stiction, wear, and failure.

\begin{figure*}[!t]
\centering
\includegraphics[width=0.75\textwidth]{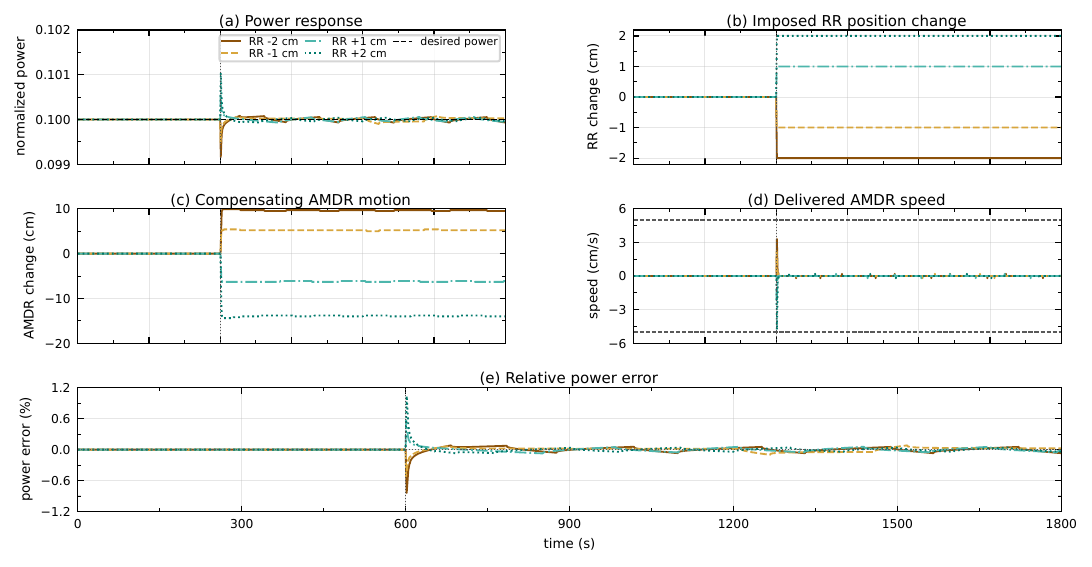}
\caption{Full-horizon responses to sustained RR displacements from the 10\% equilibrium $q_R^{0}=9.4287$ cm to $q_R^{0}+\Delta q_R$, $\Delta q_R=\pm1,\pm2$ cm. Panels show the imposed motion, compensating AMDR action, and power response. At the physical reactivity input level, the imposed RR worth enters the same scalar channel as the AMDR and is therefore a matched disturbance. The independent rod motion and the surrounding nonlinear, filtered, quantized, rate- and travel-limited realization are not covered by the fixed-LTI MRAC--LQR result. These are generic anomaly tests rather than models of a specific failure.}
\label{fig:rr_response}
\end{figure*}

\subsection{Unexpected regulating rod motion and observer boundaries}
At 600 s the RR moves from $q_R^{0}=9.4287$ cm to $q_R^{0}+\Delta q_R$ and is held while MRAC--LQR commands only the AMDR. Figure~\ref{fig:rr_response} shows all four signed cases. The motions add approximately $-6.83$, $-3.61$, $+3.98$, and $+8.33$ pcm; all recover in 2--5 s without an AMDR limit, with compensating motion and power-error sign reversing with the anomaly.

With exact, power-only feedback, warm initialization retains the 5 s recovery of the $+2$ cm case and tracks the shaped $10\%\rightarrow9\%\rightarrow10\%$ maneuver nearly as in full state feedback. The compact comparison in Table~\ref{tab:observer_init} quantifies all three initialization cases. The tenfold-xenon case remains close to the warm response even though $\widehat X/X=9.887$ at 600 s; in the observer-cold case, $\widehat X/X=1.63\times10^{-4}$ at the same time and the inconsistent precursor and temperature memories produce a large engagement transient. A $\pm1\%$ sensor bias produces about a 1\% physical offset. 

\par\smallskip
\noindent\begin{minipage}{\columnwidth}
\centering
\refstepcounter{table}\label{tab:observer_init}
{\footnotesize\textsc{Table \thetable}}\\[-1pt]
{\scriptsize\textsc{Closed-loop observer initialization results. All cases begin at the true 10\% equilibrium with exact measured power.}}
\scriptsize
\setlength{\tabcolsep}{2.4pt}
\begin{tabular}{@{}lcccc@{}}
\toprule
Initialization & Peak prelude & AMDR range & Limit time & RMSE, \\
 & power excursion & in prelude (cm) & rate/travel (s) & $600$--$2500$ s \\
\midrule
Warm/exact & $-$ & $0.0$ & $0/0$ & $2.69\times10^{-5}$ \\
$\widehat X_0=10X_0$ & $0.0622\%$ & $0.6$ & $0/0$ & $2.64\times10^{-5}$ \\
Observer-cold & $11.52\%$ & $47.0$ & $10/49$ & $4.58\times10^{-5}$ \\
\bottomrule
\end{tabular}
\end{minipage}
\par\smallskip

\section{Conclusion}
In this work, we showed how MRAC--LQR can operate with nonlinear reactor dynamics and constrained actuation, including saturation, speed limits, and quantization. The plant family results demonstrate direct adaptation within prescribed bounds while retaining the nominal LQR design, reference construction, and actuator logic. This separation makes the scope of online change explicit for subsequent verification and regulatory review, although bounded coefficients alone do not establish stability of the complete constrained loop. The pool research reactor setting provides a path from simulation to experimental demonstrations and broader consideration in advanced reactor applications.

\end{document}